\newtheorem{thm}{Theorem}
\newtheorem{lem}{Lemma}
\title{\textbf{On discrimination between the Lindley and xgamma distributions}}
\author[a]{Subhradev Sen\thanks{E-mail ID: subhradev.stat@gmail.com}}
\author[b]{Hazem Al-Mofleh}
\author[c]{Sudhansu S. Maiti}
\affil[a]{\small{Alliance School of Business, Alliance University, Bengaluru, India.}}
\affil[b]{\small{Department of Mathematics, Tafila Technical University, Tafila, Jordan.}}
\affil[c]{\small{Department of Statistics, Visva-Bharati University, Santiniketan, India.}}
\date{}
\begin{document}
\maketitle

\begin{abstract}
For a given data set the problem of selecting either Lindley or xgamma distribution with unknown parameter is investigated in this article. Both these distributions can be used quite effectively for analyzing skewed non-negative data and in modeling time-to-event data sets. We have used the ratio of the maximized likelihoods in choosing between the Lindley and xgamma distributions. Asymptotic distributions of the ratio of the maximized likelihoods are obtained and those are utilized to determine the minimum sample size required to discriminate between these two distributions for user specified probability of correct selection and tolerance limit.
\\
\textbf{Keywords:} Asymptotic distributions, Kolmogorov-Smirnov distance, Likelihood ratio statistic, life distributions.
\end{abstract}

\section{Introduction}
\label{sec:1}
It is an important problem in statistics to test whether some given observations, in view of modeling, follow one of the two probability distributions. If the two distribution possess similar structural, distributional and/or survival properties, then it is quite reasonable to construct a test procedure to determine which particular distribution need to be selected in describing the data set coming from diverse areas of application.\\ 
For last one decade, Lindley distribution (Lindley,~1958) has drawn attention of the researchers for modeling survival or reliability data sets. Its properties are explicitly studied by Ghitany et al. (2008) and is shown quite flexible in modeling time-to-event data sets. Several authors have suggested significant extensions and variations of Lindley model, see for example, Sankaran (1970), Gomez and Ojeda (2011), Nadarajah et al. (2011), Bakouch et al. (2012), Ghitany et al. (2013), Shanker et al. (2013), Merovci and Sharma (2014), Nedjar and Zeghdoudi (2016), Shibu and Irshad (2016), Asgharzadeh (2017) and references therein for more details.\\
Recently, Sen et al. (2016) introduced and studied xgamma distribution and applied it in describing survival/reliability data sets. The xgamma distribution has properties analogous to Lindley distribution and has  similar mathematical form. For more better flexibility and ease of application, few extensions of xgamma distribution are been proposed in the literature (see Sen and Chandra,~2017, Sen et al.,~2017). However, the xgamma random variables are stochastically larger than those of Lindley (see Sen et al., 2018), both the distributions are the special finite mixtures of exponential and gamma distributions and both can effectively be utilized in analyzing positively skewed data sets.\\
We address the following problem in this article. Suppose an experimenter has observed $n$ data points, say $x_1, x_2,\ldots, x_n$ and he wants to use either one parameter Lindley model or one parameter xgamma model, which one will he prefer?

The problem of testing whether some given observations follow one of the two probability distributions, is quite old in the statistical literature (see Cox,~1961, 1962; Atkinson,~1969, 1970; Chambers and Cox,~1967; Chen ,~1980 and Dyer,~1973 for more details on this). We consider in this investigation the problem of discriminating between the Lindley and xgamma distributions. We use the ratio of maximized likelihood in discriminating between the two distribution functions. We obtain the asymptotic distribution of the natural logarithm of RML following the approach of White (1982a, 1982b). It is observed that the asymptotic distribution is normal and independent of unknown parameters. The asymptotic distribution can be utilized to compute the probability of correct selection (PCS), hence we also attempted to find minimum sample size required to discriminate between the two distributions for a given value of PCS.\\
The rest of the article is organized as follows.
The method of likelihood ratio is described in section~\ref{sec:2}. Asymptotic distributions of the logarithm of RML statistics under the null hypothesis are obtained in section~\ref{sec:3}. Section~\ref{sec:4} deals with the determination of minimum sample sizes. Some Monte-Carlo simulation studies are performed in section~\ref{sec:5} and real life data set is analyzed as an illustration in section~\ref{sec:6}. Finally, section~\ref{sec:7} concludes.   

\section{Ratio of maximized likelihoods}
\label{sec:2}
Let $X_{1}, X_{2}, \ldots, X_{n}$ be independent and identically distributed (i.i.d) random variables from a Lindley or from an xgamma distribution function. We shall use the following notations throughout the article:\\
The probability density function (pdf) of a Lindley distribution with parameter $\lambda$ is denoted by
\begin{align}\label{eqn:ldpdf}
f_{LD}(x;\lambda)=\frac{\lambda^2}{(1+\lambda)}(1+x)e^{-\lambda x}; \quad x>0,\lambda>0.
\end{align}
We shall denote it by $X \sim LD(\lambda)$.\\
The pdf of xgamma distribution with parameter $\theta$ is denoted by
\begin{align}\label{eqn:xgpdf}
f_{XG}(x;\theta)=\frac{\theta^2}{(1+\theta)}\left(1+\frac{\theta}{2}x^2\right)e^{-\theta x}; \quad x>0,\theta>0.
\end{align}
We shall denote it by $X \sim XG(\theta)$.\\
Assuming the data coming from $LD(\lambda)$ or $XG(\theta)$, the likelihood functions are 
$$L_{LD}(\lambda)=\prod_{i=1}^{n}f_{LD}(x_{i};\lambda) \quad \text{and} \quad L_{XG}(\theta)=\prod_{i=1}^{n}f_{XG}(x_{i};\theta)$$ respectively.\\
The ratio of maximized likelihood (RML) is defined as
\begin{align}
L=\frac{L_{LD}(\hat{\lambda})}{L_{XG}(\hat{\theta})}
\end{align}
where $\hat{\lambda}$ and $\hat{\theta}$ are the maximum likelihood estimators of $\lambda$ and $\theta$, respectively, based on $\lbrace X_{1}, X_{2}, \ldots, X_{n}\rbrace$. Taking natural logarithm of the RML, we have the log-likelihood ration statistic as
\begin{flalign}\label{eqn:logrml}
\nonumber
T&=\ln\left[\frac{L_{LD}(\hat{\lambda})}{L_{XG}(\hat{\theta})}\right]\\
&=\ln\left(\frac{\hat{\lambda}}{\hat{\theta}}\right)^{2}+\ln\left(\frac{1+\hat{\theta}}{1+\hat{\lambda}}\right)+(\hat{\theta}-\hat{\lambda})\bar{X}+\frac{1}{n}\left[\sum_{i=1}^{n}\ln(1+X_{i})-\sum_{i=1}^{n}\ln\left(1+\frac{\hat{\theta}}{2}X_{i}^{2}\right)\right].
\end{flalign} 
Here $\bar{X}$ is the arithmetic mean of $\lbrace X_{1}, X_{2}, \ldots, X_{n}\rbrace$.\\
It is to be noted that, in case of Lindley distribution, 
\begin{align} 
\hat{\lambda}=\frac{-(\bar{X}-1)+\sqrt{(\bar{X}-1)^2+8\bar{X}}}{2 \bar{X}}
\end{align}
and in case of xgamma distribution, $\hat{\theta}$ is the solution of the equation
\begin{align}
\frac{(2+\theta)}{\theta(1+\theta)}+\frac{1}{n}\sum_{i=1}^{n}\frac{X_{i}^{2}}{2\left(1+\frac{\theta}{2}X_{i}^{2}\right)}=\bar{X}.
\end{align} 
It is clear from (\ref{eqn:logrml}) that distributions of $T$'s are independent of the parameters (see Dumonceaux et al.,~1973).\\
The following procedure can be used to discriminate between Lindley and xgamma distributions.
\begin{enumerate}
\item Choose Lindley distribution if $T>0$.
\item Choose xgamma distribution if $T<0$.
\end{enumerate}

\section{Asymptotic properties of RML}
\label{sec:3}
In this section we obtain the asymptotic distributions of RML for two different cases. Let us denote the almost sure convergence by \textit{a.s.} hereafter. The following notations are used. \\
For any Borel measurable function $h(\cdot)$, we denote,
$E_{LD}(h(Y))$ : mean of $h(Y)$ under the assumption that $Y \sim LD(\lambda)$, and 
$V_{LD}(h(Y))$ : variance of $h(Y)$ under the assumption that $Y \sim LD(\lambda)$.\\
Similarly, we define $E_{XG}(h(Y))$ and $V_{XG}(h(Y))$ as mean and variance of $h(Y)$, respectively, under the assumption that $Y \sim XG(\theta)$. Moreover, if $h(\cdot)$ and $g(\cdot)$ are two Borel measurable functions, then we define along the same line $Cov_{LD}[g(Y),h(Y)]=E_{LD}[g(Y)h(Y)]-E_{LD}[g(Y)]E_{LD}[h(Y)]$ and also $Cov_{XG}[g(Y),h(Y)]$ similarly, where $Y \sim LD(\lambda)$ and $Y \sim XG(\theta)$, respectively.

\subsection{Case 1: Data are coming from Lindley distribution}
We assume that $X_{1}, X_{2}, \ldots, X_{n}$ are $n$ data points coming from $LD(\lambda)$. For proving the main result, we take the following lemma.

\begin{lem}\label{lemma1}
Under the assumption that the data are coming from $LD(\lambda)$, as $n \rightarrow \infty$, we have
\begin{enumerate}
\item[(i)] $\hat{\lambda} \rightarrow \lambda$ \textit{a.s.}, where 
$$E_{LD}\left[\ln f_{LD}(X;\lambda)\right]=\max_{\tilde{\lambda}}E_{LD}\left[\ln f_{LD}(X;\tilde{\lambda})\right]$$
\item[(ii)] $\hat{\theta} \rightarrow \tilde{\theta}$ \textit{a.s.}, where 
$$E_{XG}\left[\ln f_{XG}(X;\tilde{\theta})\right]=\max_{\theta}E_{XG}\left[\ln f_{XG}(X;\theta)\right]$$
We denote:
$$T^{*}=\ln\left[\frac{L_{LD}(\lambda)}{L_{XG}(\tilde{\theta)}}\right]$$
\item[(iii)] $\frac{1}{\sqrt{n}}\lbrace T-E_{LD}(T)\rbrace$ is asymptotically equivalent to $\frac{1}{\sqrt{n}}\lbrace T^{*}-E_{LD}(T^{*})\rbrace$
\end{enumerate}
\end{lem}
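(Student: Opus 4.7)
The plan is to split the lemma into its three parts and handle each with well-established tools from the misspecified-MLE literature, essentially following White (1982a,b).

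Parts (i) and (ii) are standard consistency results, the first for a correctly-specified MLE and the second for a misspecified one, so I would dispatch them with Wald-type arguments. For (i), since the data genuinely come from $LD(\lambda)$, the non-negativity of the Kullback--Leibler divergence together with identifiability of the Lindley family shows that $\lambda$ uniquely maximizes $E_{LD}[\ln f_{LD}(X;\tilde\lambda)]$. I would then verify compactness (or compactification) of the parameter space, continuity of $\ln f_{LD}(x;\lambda)$ in $\lambda$, and an integrable envelope, so that a uniform SLLN delivers $\hat\lambda\to\lambda$ a.s. For (ii) the same recipe applies in the misspecified setting: I would show that $E_{LD}[\ln f_{XG}(X;\theta)]$ has a unique interior maximizer $\tilde\theta$ (strict concavity can be checked locally from the explicit second derivative) and apply the uniform SLLN to the xgamma log-likelihood. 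The subscript on the expectation in the statement of (ii) reads as a typographical slip, since the sample is Lindley-distributed; the relevant expectation should be $E_{LD}$, not $E_{XG}$.

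The heart of the lemma is (iii), for which my plan is to show that passing from the MLE-based statistic $T$ to the population-parameter-based statistic $T^{*}$ costs only $O_{P}(1)$, hence $o_{P}(\sqrt n)$. I would write $T-T^{*}=[\ell_{LD}(\hat\lambda)-\ell_{LD}(\lambda)]-[\ell_{XG}(\hat\theta)-\ell_{XG}(\tilde\theta)]$, with $\ell_{LD}$ and $\ell_{XG}$ the respective log-likelihoods, and expand each bracket to second order about the MLE. The first-order terms vanish by the score equations $\ell_{LD}'(\hat\lambda)=0$ and $\ell_{XG}'(\hat\theta)=0$, leaving Lagrange remainders of the form $\tfrac12(\hat\lambda-\lambda)^{2}(-\ell_{LD}''(\lambda^{*}))$ and its xgamma analogue. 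Since $(\hat\lambda-\lambda)^{2}=O_{P}(n^{-1})$ and $-\ell_{LD}''(\lambda^{*})/n$ converges in probability to a finite constant by the SLLN (likewise for the xgamma remainder, evaluated at $\tilde\theta$), each bracket is $O_{P}(1)$. Therefore $(T-T^{*})-E_{LD}(T-T^{*})=O_{P}(1)$, so dividing by $\sqrt n$ yields a term vanishing in probability, which is precisely the asymptotic equivalence asserted.

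The main obstacle, in my view, is securing the $O_{P}(n^{-1/2})$ rates for $\hat\lambda-\lambda$ and $\hat\theta-\tilde\theta$ on which the second-order argument depends, since the consistency established in (i)--(ii) is too weak by itself. For $\hat\lambda$ the rate is the usual parametric MLE rate under a correctly specified regular model and follows from positive Fisher information computed directly from (\ref{eqn:ldpdf}). For $\hat\theta$ one invokes White's misspecified-MLE CLT: Taylor-expanding $\ell_{XG}'(\hat\theta)=0$ about $\tilde\theta$ yields
\[ \sqrt{n}\,(\hat\theta-\tilde\theta)=-\bigl(\tfrac{1}{n}\ell_{XG}''(\theta^{*})\bigr)^{-1}\cdot\tfrac{1}{\sqrt n}\,\ell_{XG}'(\tilde\theta), \]
and the desired rate follows provided $E_{LD}[\partial_{\theta}^{2}\ln f_{XG}(X;\tilde\theta)]\neq 0$ and the score $\partial_{\theta}\ln f_{XG}(X;\tilde\theta)$ has finite variance under $LD(\lambda)$. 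Both conditions reduce to direct computations using $\ln f_{XG}(x;\theta)=2\ln\theta-\ln(1+\theta)+\ln(1+\theta x^{2}/2)-\theta x$ and the finiteness of all Lindley moments, so the remaining difficulty is computational rather than conceptual.
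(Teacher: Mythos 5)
Your proposal is correct and follows the same route as the paper, which simply defers to Theorem 1 of White (1982a) without supplying any details: your Wald-type consistency arguments for (i)--(ii) and the second-order expansion showing $T-T^{*}=O_{P}(1)$ (hence $o_{P}(\sqrt{n})$ after centering) are exactly the content that citation stands in for. You also correctly identify the typographical slip in part (ii) --- the expectation defining $\tilde{\theta}$ should be $E_{LD}$, not $E_{XG}$, consistent with the paper's own definition $g(\theta)=E_{LD}\left[\ln f_{XG}(X;\theta)\right]$ in (\ref{eqn:curltheta}).
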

\begin{proof}
Applying the similar argument as in theorem 1 in White (1982) we can easily get the proof and hence is omitted here.
\end{proof}
We have the following theorem.
\begin{thm}
\label{theorem1}
Under the assumption that the data are coming from $LD(\theta)$, $T$ is asymptotically normally distributed with mean $E_{LD}(T)$ and variance $V_{LD}(T)=V_{LD}(T^{*})$.
\end{thm}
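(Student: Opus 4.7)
The plan is to invoke Lemma \ref{lemma1}(iii) to reduce the problem to a sum of i.i.d.\ random variables and then apply the classical central limit theorem (CLT) to that sum. The key point is that $T$ itself involves the MLEs $\hat{\lambda}, \hat{\theta}$ (so it is not a clean i.i.d.\ sum), whereas $T^{*}$ uses the deterministic quantities $\lambda$ and $\tilde{\theta}$, making it a straight sum of i.i.d.\ log-density contributions.

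By part (iii) of Lemma \ref{lemma1}, $\tfrac{1}{\sqrt{n}}\{T-E_{LD}(T)\}$ and $\tfrac{1}{\sqrt{n}}\{T^{*}-E_{LD}(T^{*})\}$ are asymptotically equivalent, so it suffices to establish a normal limit for the latter. Writing
$$T^{*}=\sum_{i=1}^{n}W_{i}, \qquad W_{i}=\ln f_{LD}(X_{i};\lambda)-\ln f_{XG}(X_{i};\tilde{\theta}),$$
we observe that the $W_{i}$ are i.i.d.\ under the assumption $X_{i}\sim LD(\lambda)$. I would then verify the integrability requirement $E_{LD}(W_{1}^{2})<\infty$ by a termwise check: $E_{LD}[X^{2}]<\infty$ (Lindley has all moments), $E_{LD}[\{\ln(1+X)\}^{2}]<\infty$ (since $\ln(1+X)\le X$), and $E_{LD}[\{\ln(1+\tfrac{\tilde{\theta}}{2}X^{2})\}^{2}]<\infty$ (bounded by a polynomial in $\ln X$). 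With finite variance secured, the Lindeberg--L\'evy CLT yields
$$\frac{T^{*}-E_{LD}(T^{*})}{\sqrt{n\,V_{LD}(W_{1})}}\;\xrightarrow{d}\;N(0,1),$$
and since $V_{LD}(T^{*})=n\,V_{LD}(W_{1})$, this is exactly the assertion that $T^{*}$ is asymptotically normal with mean $E_{LD}(T^{*})$ and variance $V_{LD}(T^{*})$.

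Finally I would transfer the result back to $T$ through Lemma \ref{lemma1}(iii): the asymptotic equivalence of the centered--normalized versions implies that $\tfrac{1}{\sqrt{n}}\{T-E_{LD}(T)\}$ and $\tfrac{1}{\sqrt{n}}\{T^{*}-E_{LD}(T^{*})\}$ share the same weak limit, giving $T$ asymptotic normality with mean $E_{LD}(T)$ and asymptotic variance $V_{LD}(T)=V_{LD}(T^{*})$ (the latter equality being the content of the equivalence, interpreted asymptotically).

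The main obstacle I anticipate is not the CLT step itself but the bookkeeping around Lemma \ref{lemma1}: specifically, confirming the integrability of $W_{1}$ rigorously and making precise the sense in which the equality $V_{LD}(T)=V_{LD}(T^{*})$ holds (asymptotic rather than exact, inherited from the equivalence of the rescaled fluctuations). Once these are cleanly stated, the rest of the argument is a direct application of CLT to i.i.d.\ summands, exactly as in White's (1982) framework.
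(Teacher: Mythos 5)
Your proposal is correct and follows essentially the same route as the paper: reduce to $T^{*}$ via Lemma~\ref{lemma1}(iii), apply the classical CLT to $T^{*}$ viewed as a sum of i.i.d.\ terms (the paper invokes this through parts (ii)--(iii) of the lemma without writing out the summands), and transfer the normal limit back to $T$. The only difference is that you make explicit the i.i.d.\ decomposition and the second-moment checks, which the paper leaves implicit.
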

\begin{proof}
By central limit theorem (CLT) and from part (ii) of lemma~\ref{lemma1}, it is clear that $\frac{1}{\sqrt{n}}\lbrace T^{*}-E_{LD}(T^{*})\rbrace$ is asymptotically normally distributed with mean $0$ and variance $V_{LD}(T^{*})$. Again by the part (iii) of lemma~\ref{lemma1}, we have $T$ is asymptotically normally distributed with mean $E_{LD}(T)$ and variance $V_{LD}(T^{*})$. Hence the proof of the theorem.
\end{proof}
Now, we shall find $\tilde{\theta}$, $E_{LD}(T)$ and $V_{LD}(T)$.
We define,
\begin{align}
\label{eqn:curltheta}
\nonumber
g(\theta)&=E_{LD}\left[\ln f_{XG}(X;\theta)\right]\\
&=2\ln \theta-\ln(1+\theta)+E_{LD}\left[\ln\left(1+\frac{\theta}{2}X^2\right)\right]-\frac{\theta(2+\lambda)}{\lambda(1+\lambda)}.
\end{align}
Therefore, $\tilde{\theta}$ can be obtained by maximizing $g(\theta)$ with respect to $\theta$ or as a solution of
\begin{align}
\label{eqn:tildetheta}
\frac{2}{\tilde{\theta}}-\frac{1}{1+\tilde{\theta}}+E_{LD}\left[\frac{\frac{2}{X^2}}{{1+\frac{\tilde{\theta}}{2}X^2}}\right]-\frac{2+\lambda}{\lambda(1+\lambda)}=0.
\end{align}
It should be noted that, $\tilde{\theta}$ depends on $\lambda$, for brevity we do not make it explicit. For further development we compute $\tilde{\theta}$ for $\lambda=1$, and we denote it by $\tilde{\theta_{1}}$.   As it is seen that it is difficult to obtain an analytic solution, we solve it numerically.\\
It is to be noted that $\lim_{n\rightarrow \infty}\frac{E_{LD}(T)}{n}$ and $\lim_{n\rightarrow \infty}\frac{V_{LD}(T)}{n}$ exist.\\
We denote, $$\lim_{n\rightarrow \infty}\frac{E_{LD}(T)}{n}=AM_{LD}(\lambda)$$ and $$\lim_{n\rightarrow \infty}\frac{V_{LD}(T)}{n}=AV_{LD}(\lambda),$$ respectively.\\
Therefore, for large $n$,
\begin{align*}
\nonumber
\frac{E_{LD}(T)}{n}\approx AM_{LD}(\lambda)
=E_{LD}\left[\ln f_{LD}(X;\lambda)-\ln f_{XG}(X;\tilde{\theta})\right].
\end{align*}
Hence,
\begin{align}\label{eqn:amld}
\nonumber
AM_{LD}(\lambda)=\ln\left(\frac{\lambda}{\tilde{\theta}}\right)^{2}+\ln\left(\frac{1+\tilde{\theta}}{1+\lambda}\right)+(\tilde{\theta}-\lambda)\frac{(2+\lambda)}{\lambda(1+\lambda)}&+E_{LD}[\ln(1+X)]\\
&-E_{LD}\left[\ln\left(1+\frac{\tilde{\theta}}{2}X^{2}\right)\right] 
\end{align} 
and we also have, for large $n$, 
\begin{align*}
\frac{V_{LD}(T)}{n}&\approx AV_{LD}(\lambda)=V_{LD}\left[\ln f_{LD}(X;\lambda)-\ln f_{XG}(X;\tilde{\theta})\right]\\
&=V_{LD}\left[\ln\left(\frac{\lambda}{\tilde{\theta}}\right)^{2}+\ln\left(\frac{1+\tilde{\theta}}{1+\lambda}\right)+(\tilde{\theta}-\lambda)X+\ln(1+X)
-\ln\left(1+\frac{\tilde{\theta}}{2}X^{2}\right)\right]
\end{align*}
Therefore,
\begin{flalign}\label{eqn:avld}
\nonumber
AV_{LD}(\lambda)&=(\tilde{\theta}-\lambda)^{2}\frac{(\lambda^2+4\lambda+2)}{\lambda^2(1+\lambda)^2}+V_{LD}[\ln(1+X)]+V_{LD}\left[\ln\left(1+\frac{\tilde{\theta}}{2}X^{2}\right)\right]\\
\nonumber
&+(\tilde{\theta}-\lambda)\left[Cov_{LD}\lbrace X,\ln(1+X)\rbrace - Cov_{LD}\left\lbrace X,\ln\left(1+\frac{\tilde{\theta}}{2}X^{2}\right)\right\rbrace\right]\\&-Cov_{LD}\left[\ln(1+X),\ln\left(1+\frac{\tilde{\theta}}{2}X^{2}\right)\right]
\end{flalign}
\subsection{Case 2: Data are coming from xgamma distribution}
In this case, we assume that $X_{1}, X_{2}, \ldots, X_{n}$ are $n$ data points coming from xgamma distribution with parameter $\theta$. We have the following lemma.
\begin{lem}\label{lemma2}
Under the assumption that the data are coming from $xg(\theta)$, as $n \rightarrow \infty$, we have
\begin{enumerate}
\item[(i)] $\hat{\theta} \rightarrow \theta$ \textit{a.s.}, where 
$$E_{XG}\left[\ln f_{XG}(X;\theta)\right]=\max_{\tilde{\theta}}E_{XG}\left[\ln f_{XG}(X;\tilde{\theta})\right].$$ 

\item[(ii)] $\hat{\lambda} \rightarrow \tilde{\lambda}$ \textit{a.s.}, where 
$$E_{LD}\left[\ln f_{LD}(X;\tilde{\lambda})\right]=\max_{\lambda}E_{LD}\left[\ln f_{LD}(X;\lambda)\right].$$
We denote
$$T_{*}=\ln\left[\frac{L_{LD}(\tilde{\lambda})}{L_{XG}(\theta)}\right]$$
\item[(iii)] $\frac{1}{\sqrt{n}}\lbrace T-E_{XG}(T)\rbrace$ is asymptotically equivalent to $\frac{1}{\sqrt{n}}\lbrace T_{*}-E_{XG}(T_{*})\rbrace$.
\end{enumerate}
\end{lem}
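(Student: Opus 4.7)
The plan is to mirror the argument of Lemma~\ref{lemma1} (which itself imitates Theorem~1 of White, 1982), with the roles of Lindley and xgamma interchanged, since the only asymmetry between the two cases is the data-generating mechanism. The three parts can be proved in the order they are stated: (i) first, standard consistency of the MLE under correct specification; (ii) next, convergence of the misspecified MLE to its Kullback--Leibler projection; and (iii) finally, an asymptotic equivalence obtained by a Taylor expansion combined with the MLE score identities.

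For part (i), I would verify the standard regularity conditions for the xgamma family (identifiability, measurability, continuity in $\theta$, and existence of an envelope function for $\ln f_{XG}(X;\theta)$ in a neighborhood of the true value). The uniform strong law of large numbers then yields $\frac{1}{n}\sum_{i=1}^{n}\ln f_{XG}(X_i;\theta) \to E_{XG}[\ln f_{XG}(X;\theta)]$ uniformly on compacta. Jensen's inequality applied to the Kullback--Leibler divergence shows that this limit is uniquely maximized at the true value, so the argmax $\hat{\theta}$ converges almost surely to $\theta$.

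Part (ii) is where the misspecification enters. Here I would again invoke the uniform SLLN, this time to conclude that $\frac{1}{n}\sum_{i=1}^{n}\ln f_{LD}(X_i;\lambda)\to E_{XG}[\ln f_{LD}(X;\lambda)]$ almost surely, uniformly on compacta in $\lambda$. The limit function has a unique maximum at the Kullback--Leibler projection $\tilde{\lambda}$, characterized as the solution of $E_{XG}\!\left[\frac{\partial}{\partial\lambda}\ln f_{LD}(X;\lambda)\right]=0$. (The statement in the lemma as printed writes $E_{LD}$; this appears to be a typographical slip, and the correct quantity is $E_{XG}$, by analogy with the role of $\tilde{\theta}$ in Lemma~\ref{lemma1}.) Standard quasi-MLE arguments (White, 1982a) then give $\hat{\lambda}\to\tilde{\lambda}$ almost surely.

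For part (iii), the goal is to show that $T-T_{*}$ has variance $o(n)$. Writing
\begin{align*}
T-T_{*}=\sum_{i=1}^{n}\bigl[\ln f_{LD}(X_i;\hat{\lambda})-\ln f_{LD}(X_i;\tilde{\lambda})\bigr]-\sum_{i=1}^{n}\bigl[\ln f_{XG}(X_i;\hat{\theta})-\ln f_{XG}(X_i;\theta)\bigr],
\end{align*}
I would Taylor expand each bracketed sum around $\tilde{\lambda}$ and $\theta$ respectively. The first-order terms vanish by the score equations that define $\hat{\lambda}$ and $\hat{\theta}$ (together with the fact that $\sum_i\partial_\lambda\ln f_{LD}(X_i;\tilde\lambda)$ and $\sum_i\partial_\theta\ln f_{XG}(X_i;\theta)$ can be rewritten, via a further expansion, as $-J_n(\hat{\lambda}-\tilde{\lambda})$ and $-I_n(\hat{\theta}-\theta)$, where $J_n,I_n$ are $O_p(n)$). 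The surviving second-order terms are of order $n\cdot(\hat{\lambda}-\tilde{\lambda})^{2}=O_p(1)$ and similarly for $\hat{\theta}$. Hence $T-T_{*}=O_p(1)$, and subtracting the analogous deterministic difference of means gives $\frac{1}{\sqrt{n}}\bigl\{(T-E_{XG}(T))-(T_{*}-E_{XG}(T_{*}))\bigr\}=o_p(1)$, which is exactly the asymptotic equivalence claimed.

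The main obstacle is the bookkeeping in step~(iii): one must keep track of the fact that $\tilde{\lambda}$ is the pseudo-true value for a misspecified model, so the score $\sum_i\partial_\lambda\ln f_{LD}(X_i;\tilde\lambda)$ is not zero at $\tilde\lambda$ but only has mean zero, and the standard rate $\hat{\lambda}-\tilde{\lambda}=O_p(n^{-1/2})$ from White's quasi-MLE theory must be used carefully in the second-order remainder. The tail integrability needed to justify differentiation under the expectation (uniform envelope for $\partial^2_\lambda\ln f_{LD}$ and $\partial^2_\theta\ln f_{XG}$ on a neighborhood of $(\tilde{\lambda},\theta)$) is straightforward once one notes that $\ln(1+x)$ and $\ln(1+\frac{\theta}{2}x^2)$ and their derivatives in $\theta$ are dominated by polynomials in $x$, which have finite xgamma moments of all orders.
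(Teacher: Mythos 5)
Your proposal is correct and follows essentially the same route as the paper, which simply invokes Theorem 1 of White (1982a,b) for all three parts; your write-up is a fleshed-out version of that quasi-maximum-likelihood argument (consistency under correct specification, almost sure convergence of the misspecified MLE to the Kullback--Leibler projection, and the second-order Taylor equivalence giving $T-T_{*}=O_p(1)$). Your remark that the expectation in part (ii) should be $E_{XG}$ rather than $E_{LD}$ is also consistent with the paper's own later definition $h(\lambda)=E_{XG}\left[\ln f_{LD}(X;\lambda)\right]$, whose maximizer is $\tilde{\lambda}$.
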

\begin{proof}
Proof comes applying the similar argument of White (1982) in theorem 1, hence is omitted.
\end{proof}
We have the following theorem in this case.
\begin{thm}\label{theorem2}
Under the assumption that data coming from $XG(\theta)$, $T$ is asymptotically normally distributed with mean $E_{XG}(T)$ and variance $V_{XG}(T)=V_{XG}(T_{*})$.
\end{thm}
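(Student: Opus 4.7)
The plan is to imitate exactly the proof of Theorem~\ref{theorem1}, with the roles of $LD$ and $XG$ interchanged; essentially all the work has been absorbed into Lemma~\ref{lemma2}, so the theorem itself reduces to one central limit theorem call followed by an asymptotic-equivalence argument.

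First I would write
$$T_{*}=\sum_{i=1}^{n}\bigl[\ln f_{LD}(X_{i};\tilde{\lambda})-\ln f_{XG}(X_{i};\theta)\bigr].$$
Under the hypothesis $X_{i}\sim XG(\theta)$, the summands are i.i.d., because $\tilde{\lambda}$ from Lemma~\ref{lemma2}(ii) is a deterministic constant depending only on $\theta$ (and not on the sample). The classical CLT then gives that $\frac{1}{\sqrt{n}}\{T_{*}-E_{XG}(T_{*})\}$ converges in distribution to $N(0,\sigma^{2})$, with the per-observation variance $\sigma^{2}=V_{XG}\bigl[\ln f_{LD}(X;\tilde{\lambda})-\ln f_{XG}(X;\theta)\bigr]=V_{XG}(T_{*})/n$. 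Equivalently, $T_{*}$ itself is asymptotically normal with mean $E_{XG}(T_{*})$ and variance $V_{XG}(T_{*})$.

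Next I would invoke Lemma~\ref{lemma2}(iii): since $\frac{1}{\sqrt{n}}\{T-E_{XG}(T)\}$ is asymptotically equivalent to $\frac{1}{\sqrt{n}}\{T_{*}-E_{XG}(T_{*})\}$, the two sequences share the same weak limit. Pushing the CLT limit through this equivalence yields that $T$ is asymptotically $N\bigl(E_{XG}(T),V_{XG}(T_{*})\bigr)$, and matching asymptotic variances forces $V_{XG}(T)=V_{XG}(T_{*})$, which is exactly the equality claimed in the statement.

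The only potential pitfall here is conceptual rather than computational: one must verify that Lemma~\ref{lemma2}(iii) genuinely transfers \emph{weak convergence}, not just the first two moments. Since the lemma asserts that the two $\sqrt{n}$-standardized sequences differ by a term tending to zero in probability (in the spirit of White, 1982), a routine Slutsky-style argument forces the two weak limits to coincide, so the subtlety resolves without additional machinery, and the theorem follows immediately.
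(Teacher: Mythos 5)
Your proposal is correct and follows essentially the same route as the paper: the paper's proof of Theorem~\ref{theorem2} simply defers to the argument of Theorem~\ref{theorem1}, which is exactly the CLT-on-$T_{*}$ plus Lemma~\ref{lemma2}(iii) transfer that you spell out (your Slutsky remark just makes explicit what the paper leaves implicit).
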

\begin{proof}
The proof comes with similar argument as given in theorem~\ref{theorem1}.
\end{proof}
Now, we shall obtain $\tilde{\lambda}$, $E_{XG}(T)$ and $V_{XG}(T)$.\\ 
Let us define,
\begin{align*}
h(\lambda)=E_{XG}\left[\ln f_{LD}(X;\lambda)\right]
=2\ln\lambda-\ln(1+\lambda)+E_{XG}[\ln(1+X)]-\lambda\frac{(3+\theta)}{\theta(1+\theta)}
\end{align*}
Therefore, $\tilde{\lambda}$ can be obtained by maximizing $h(\lambda)$ with respect to $\lambda$ or as a solution of
\begin{align}
\label{eqn:tildelambda}
\frac{2}{\tilde{\lambda}}-\frac{1}{1+\tilde{\lambda}}-\frac{3+\theta}{\theta(1+\theta)}=0.
\end{align}
It is noted that $\tilde{\lambda}$ depends on $\theta$,  we do not make it explicit for brevity. For further development we compute $\tilde{\lambda}$ for $\theta=1$, and we denote it by $\tilde{\lambda_{1}}$.  It is difficult to obtain an analytic solution, so we solve it numerically.\\
We note that, $\lim_{n\rightarrow \infty}\frac{E_{XG}(T)}{n}$ and $\lim_{n\rightarrow \infty}\frac{V_{XG}(T)}{n}$ exist. We denote, with a similar fashion as before, 
$\lim_{n\rightarrow \infty}\frac{E_{XG}(T)}{n}=AM_{XG}(\theta)$ and $\lim_{n\rightarrow \infty}\frac{V_{XG}(T)}{n}=AV_{XG}(\theta)$, respectively.\\
We have, for large $n$,
\begin{align*}
\frac{E_{XG}(T)}{n}&\approx AM_{XG}(\theta)
=E_{XG}\left[\ln f_{LD}(X;\tilde{\lambda})-\ln f_{XG}(X;\theta)\right],
\end{align*}
so that
\begin{align}\label{eqn:amxg}
\nonumber
AM_{XG}(\theta)=\ln\left(\frac{\tilde{\lambda}}{\theta}\right)^{2}+\ln\left(\frac{1+\theta}{1+\tilde{\lambda}}\right)+(\theta-\tilde{\lambda})\frac{(3+\theta)}{\theta(1+\theta)}&+E_{XG}[\ln(1+X)]\\
&-E_{XG}\left[\ln\left(1+\frac{\theta}{2}X^{2}\right)\right]. 
\end{align} 
Again, for large $n$, we have
\begin{align*}
\frac{V_{XG}(T)}{n}&\approx AV_{XG}(\theta)=V_{XG}\left[\ln f_{LD}(X;\tilde{\lambda})-\ln f_{XG}(X;\theta)\right],
\end{align*} 
to obtain
\begin{flalign}\label{eqn:avxg}
\nonumber
AV_{XG}(\theta)&=(\theta-\tilde{\lambda})^{2}\frac{(\theta^2+8\theta+3)}{\theta^2(1+\theta)^2}+V_{XG}[\ln(1+X)]+V_{XG}\left[\ln\left(1+\frac{\theta}{2}X^{2}\right)\right]\\
\nonumber
&+(\theta-\tilde{\lambda})\left[Cov_{XG}\lbrace X,\ln(1+X)\rbrace - Cov_{XG}\left\lbrace X,\ln\left(1+\frac{\theta}{2}X^{2}\right)\right\rbrace\right]\\&-Cov_{XG} \left \lbrace \ln(1+X),\ln\left(1+\frac{\theta}{2}X^{2}\right) \right \rbrace.
\end{flalign}
Note that $\tilde{\lambda}$, $\tilde{\theta}$, $AM_{LD}(\lambda)$, $AV_{LD}(\lambda)$, $AM_{XG}(\theta)$ and $AV_{XG}(\theta)$  are numerically computed by using \texttt{R 3.5.1} programming language (R Core Team, 2018). Table~\ref{tab:tab1} and Table~\ref{tab:tab2} display the values of $AM_{LD}(\lambda)$, $AV_{LD}(\lambda)$ and $\tilde{\theta}$ for different $\lambda$, and the values of $AM_{XG}(\theta)$, $AV_{XG}(\theta)$ and $\tilde{\lambda}$ for different $\theta$, respectively.
\begin{table}[h!]
    \begin{minipage}{.5\linewidth}
      \caption{\footnotesize{The values of $AM_{LD}(\lambda)$, $AV_{LD}(\lambda)$ and $\tilde{\theta}$ for Different $\lambda$}}
      \label{tab:tab1}
      \centering
      \scalebox{0.8}{
        \begin{tabular}{lcccccc}
\hline
$\lambda$ & & $AM_{LD}(\lambda)$ & & $AV_{LD}(\lambda)$ & & $\tilde{\theta}$ \\ \hline
   0.45 & & 0.00794 & & 0.01582 & & 0.59983\\
   0.55 & & 0.00738 & & 0.01450 & & 0.72360\\
   0.65 & & 0.00699 & & 0.01363 & & 0.84547\\
   0.70 & & 0.00684 & & 0.01330 & & 0.90578\\
   0.75 & & 0.00670 & & 0.01301 & & 0.96574\\
   0.78 & & 0.00663 & & 0.01286 & & 1.00154\\
   0.89 & & 0.00639 & & 0.01237 & & 1.13186\\
   0.90 & & 0.00637 & & 0.01233 & & 1.14363\\
   1.15 & & 0.00593 & & 0.01150 & & 1.43473\\
   1.16 & & 0.00591 & & 0.01147 & & 1.44626\\
   1.37 & & 0.00561 & & 0.01090 & & 1.68648\\
   1.38 & & 0.00559 & & 0.01087 & & 1.69784\\
\hline
\end{tabular}
}
    \end{minipage}%
    \begin{minipage}{.5\linewidth}
      \centering
        \caption{\footnotesize{The values of $AM_{XG}(\theta)$, $AV_{XG}(\theta)$ and $\tilde{\lambda}$ for Different $\theta$}}
        \label{tab:tab2}
        \scalebox{0.8}{
        \begin{tabular}{ccccccc}
\hline
$\theta$ & & $AM_{XG}(\theta)$ & & $AV_{XG}(\theta)$ & & $\tilde{\lambda}$ \\ \hline
  0.85 & &-0.00718 & &0.01480 & & 0.65520\\
  0.90 & &-0.00706 & &0.01456 & & 0.69686\\
  1.05 & &-0.00674 & &0.01392 & & 0.82302\\
  1.10 & &-0.00665 & &0.01372 & & 0.86544\\
  1.25 & &-0.00639 & &0.01317 & & 0.99369\\
  1.26 & &-0.00637 & &0.01314 & & 1.00229\\
  1.40 & &-0.00616 & &0.01267 & & 1.12328\\
  1.50 & &-0.00601 & &0.01236 & & 1.21035\\
  1.65 & &-0.00580 & &0.01191 & & 1.34185\\
  1.80 & &-0.00561 & &0.01149 & & 1.47435\\
  2.00 & &-0.00536 & &0.01096 & & 1.65242\\
  2.05 & &-0.00530 & &0.01084 & & 1.69716\\
\hline
\end{tabular}
}
    \end{minipage} 
\end{table}

\section{Sample size determination}
\label{sec:4}
In this section, we shall discuss a procedure to determine the minimum sample size required to discriminate between the Lindley and xgamma distributions, for a pre-specified user specific probability of correct selection (PCS).\\
It is important to know the distance (or closeness or proximity) between the two distributions, under consideration, prior to establish a discrimination strategy between them. There are several ways to measure the distance between the two probability distributions, such as, \textit{Kolmogorov-Smirnov} (K--S) distance or \textit{Helinger} distance. It is quite natural that, for a given PCS, a very large sample is required to discriminate between the distributions that are very close.\\ 
On the other hand, a small sample may be adequate to discriminate between the two distributions that are not so close. However, it is also logical that, if the two distributions under consideration  are very close then one may not need to differentiate between them from a practical angle. So, it is expected that a practitioner will pre-specify the PCS and a tolerance limit in view of the distance between the two probability distributions. The tolerance limit indicates that an user does not want to make a distinction between two probability distributions if the distance measure between them is less than the tolerance limit.\\
We can determine the minimum sample size required to discriminate between two distributions based on PCS and the tolerance limit. We apply here the very popular K-S distance to discriminate between two distribution functions.\\ 
We have observed in section~\ref{sec:3}, that for a large $n$ RML statistics follow approximately normal distribution. We use this fact along with the help of K--S distance to determine the minimum sample size required such that PCS attains a certain protection level, say $p^*$, for a given tolerance level $D^*$.
The K--S distance between two distribution functions, say $F(x)$ and $G(x)$, is defined as
\begin{align}
\sup_{-\infty<x<\infty}|F(x)-G(x)| 
\label{eq3}
\end{align}
The method is explained below for Case 1, the procedure for Case 2 follows exactly along the same line may be noted also.
We have seen that $T$ is asymptotically normally distributed with mean $E_{LD}(T)$ and variance $V_{LD}(T)$, therefore, we have the probability of correct selection (PCS) as
\begin{align}
PCS(\lambda)=Pr\left[T>0|\lambda\right]\approx 1-\Phi \left(\frac{-E_{LD}(T)}{\sqrt{V_{LD}(T)}}\right)=1-\Phi \left(\frac{-n \times AM_{LD}(\lambda)}{\sqrt{n \times AV_{LD}(\lambda)}}\right)
\end{align} 
Here $\Phi(.)$ denotes the distribution function of the standard normal variable. To determine the sample size required to attain at least a protection level $p^*$, we equate,
$$\Phi \left(\frac{-n \times AM_{LD}(\lambda)}{\sqrt{n \times AV_{LD}(\lambda)}}\right)=1-p^*$$ 
to solve for $n$ and it provides
\begin{align}
n=\frac{z_{p^*}^2\times AV_{LD}(\lambda)}{\left[AM_{LD}(\lambda)\right]^2} \label{eq1}
\end{align}
Here $z_{p^*}$ is the $100p^*$ percentile point of a standard normal distribution. $AM_{LD}(\theta)$ and $AV_{LD}(\lambda)$ are same as defined before. For $p^{*}=0.90$ and for different values of $\lambda$, the possible values for $n$ is reported in Table~\ref{tab:tab3}.\\
Along with the similar argument, the sample size $n$ required for Case 2 is obtained as
\begin{align}
n=\frac{z_{p^*}^2\times AV_{XG}(\theta)}{\left[AM_{XG}(\theta)\right]^2} \label{eq2}
\end{align}
For $p^{*}=0.90$ and for different values of $\theta$, the possible values for $n$ is reported in Table~\ref{tab:tab4}.

From Tables~\ref{tab:tab3} and~\ref{tab:tab4} it is immediate that as $\lambda$ moves away from 0.78 and $\theta$ moves away from 1.26, for a
given PCS, the required sample size decreases as expected.\\
Suppose that one would like to choose the minimum sample size needed for a given protection level $p^*$ when the distance between two distribution functions is greater
than a given tolerance level $D^*$. 
We report K--S distance between $LD(\lambda)$ and $XG(\tilde{\theta})$ for different values of $\lambda$
is reported in Table~\ref{tab:tab3}. Here $\tilde{\theta}$ is same as defined in (\ref{eqn:tildetheta}) and it has been reported in Table~\ref{tab:tab1}. Similarly, K–-S distance between $XG(\theta)$ and $LD(\tilde{\lambda})$ for different values of $\theta$ is reported in Table~\ref{tab:tab4}. Here $\tilde{\lambda}$ is same as defined in (\ref{eqn:tildelambda}) and it has been reported in Table~\ref{tab:tab2}. \\
Now, to determine the minimum sample size required to discriminate between Lindley and xgamma distributions for given $p^*$ and $D^*$. Suppose $p^{*}=0.90$ and $D^{*}=0.03$. Here tolerance level $D^{*}= 0.03$ means that the practitioner wants to discriminate between a Lindley distribution function and a xgamma distribution function only when their K–-S distance is more than $0.03$. From Table~\ref{tab:tab3}, it is clear that for
Case 1, K–-S distance will be more than $0.03$ if $0.89 \leq \lambda \leq 1.38$. 
Similarly, from Table~\ref{tab:tab4}, it is clear that for Case 2, K–-S distance will be more than $0.03$ if $1.10 \leq \theta\leq 2.05$. Therefore, if the null distribution is Lindley, then for the tolerance level $D^{*}=0.03$, one needs $n = \max \lbrace 143, 9 \rbrace =143$ to meet the PCS when $p^{*}=0.90$.\\ 
Similarly, if the null distribution is xgamma then one needs $n = \max \lbrace137, 13\rbrace =137$ to meet the PCS when $p^{*}=0.90$ and $D^{*}= 0.03$. Therefore, for the given tolerance level $0.03$ one needs $n=\max\lbrace143,137\rbrace=143$ to meet the protection level $p^{*}=0.90$ simultaneously for both the cases.

\section{Simulation studies}
\label{sec:5}
In this section, we present some simulation results to investigate the behavior of these asymptotic
results derived in Section~\ref{sec:3} for finite sample sizes. All simulations are performed using
the \texttt{R 3.5.1} programming language (R Core Team, 2018). These \texttt{R} codes are available to the reader from the authors.\\
We consider $n = 20, 40, 60, 80, 100, 400$, and for Case 1, the null distribution is Lindley and the alternative is xgamma, we consider $\lambda=0.45,0.55,0.65,0.70,0.75,0.78,0.89,0.90,1.15,1.16,$ $1.37,1.38$, for a fixed $\lambda$ and $n$, we generate a random sample of size $n$ from $LD(\lambda)$ and we replicate this process $25,000$ times.\\
Similarly, for Case 2, the null distribution is xgamma and the alternative is Lindley, we consider $\theta=0.85,0.90,1.05,1.10,1.25,1.26,1.40,1.50,1.65,1.80,2.00,2.05$.
for a fixed $\theta$ and $n$, we generate a random sample of size $n$ from $XG(\theta)$ and we replicate this process $25,000$ times.
\begin{table}[h!]
\centering
\caption{\small{The minimum sample size in (\ref{eq1}), for $p^* = 0.90$, the null distribution is Lindley, the K–-S distance between $LD(\lambda)$ and $XG(\tilde{\theta})$ for different values of $\lambda$}}%
\label{tab:tab3} \resizebox{\textwidth}{!} {
\begin{tabular}{lcccccccccccccc}
\hline
$\lambda$ &0.45  &   0.55  &  0.65   &  0.70   &  0.75   & 0.78 & 0.89 & 0.90  &   1.15 & 1.16 & 1.37& 1.38\\
$n$       &15     &   33    &  96     &  196    &  400    & 481 & 143 & 127   &   21  & 20 & 10 & 9 \\
$K-S$     &0.02158&   0.02520&  0.02776&  0.02867&  0.02943& 0.02983& 0.03098 & 0.03106&  0.03196 & 0.03193 & 0.03118 & 0.03086\\
\hline
\end{tabular}}
\end{table}
For each case and replication, we calculate the PCS based on asymptotic results derived in Section~\ref{sec:3}, furthermore, we  calculate it as derived in Section~\ref{sec:4}. The results are reported in Tables~\ref{tab:tab5} and~\ref{tab:tab6}.\\
It is clear from Tables~\ref{tab:tab5} and~\ref{tab:tab6} that as the sample size increases the PCS
increases as expected. Also, for a small sample size, $n=20$, the asymptotic results work quite well for both the cases for all possible parameter ranges. From the simulation study it is recommended that asymptotic results can be used quite effectively even when the sample size is as small as $20$ for all possible choices of the shape parameters.
\begin{table}[h!]
\centering
\caption{\small{The minimum sample size in (\ref{eq2}), for $p^* = 0.90$, the null distribution is xgamma, the K–-S distance between $XG(\theta)$ and $LD(\tilde{\lambda})$ for different values of $\theta$}}%
\label{tab:tab4} \resizebox{\textwidth}{!} {
\begin{tabular}{lcccccccccccc}
\hline
$\theta$ &0.85  &   0.90  &   1.05  &   1.10  &    1.25    &  1.26    &  1.40   &  1.50    & 1.65  &  1.80 & 2.00 & 2.05\\
$n$      &23    &   30    &   87    &   137     &   525   &    532  &     191  &    89   &    41    &   24 & 14 & 13\\
$K-S$    &0.02674 & 0.02771 & 0.02964 & 0.03007  & 0.03099  & 0.03104  & 0.03155 & 0.03175 & 0.03183 & 0.03164 &  0.03102 & 0.03087 \\
\hline
\end{tabular}}
\end{table}

\begin{table}[h!]
\centering
\caption{\small{The PCS based on Monte-Carlo Simulations and the asymptotic results (in parenthesis) when the null distribution is Lindley, for different values of $\lambda$}}%
\label{tab:tab5}
\scalebox{0.7}{
\begin{tabular}{lcccccc}
\hline
$\lambda \downarrow n \rightarrow$ & 20        & 40        & 60        & 80       & 100       & 400   \\ \hline
\multirow{2}{*}{0.45}      & 0.62020  & 0.65264     & 0.68804     & 0.71888    & 0.73360    & 0.89796	 \\
          &$(0.61118)$  &$(0.65519)$  &$(0.68763)$  &$(0.71389)$  &$(0.73613)$  &$(0.89669)$ \\  \hline
\multirow{2}{*}{0.55}     & 0.61276   & 0.65620    & 0.68752    & 0.71496    & 0.72740    & 0.89004	 \\
          &$(0.60797)$  &$(0.65083)$  &$(0.68248)$  &$(0.70818)$  &$(0.72999)$  &$(0.88982)$ \\  \hline
\multirow{2}{*}{0.65}      & 0.61848  & 0.65288     & 0.67788	    & 0.71256     & 0.73324     & 0.88344    \\
          &$(0.60560)$  &$(0.64759)$  &$(0.67866)$  &$(0.70392)$  &$(0.72540)$  &$(0.88452)$ \\  \hline
\multirow{2}{*}{0.70}      & 0.61528  & 0.65452     & 0.68052	    & 0.70812     &0.72808      & 0.88052    \\
          &$(0.60459)$  &$(0.64621)$  &$(0.67703)$  &$(0.70211)$  & $(0.72344)$ &$(0.88223)$ \\  \hline
\multirow{2}{*}{0.75}      & 0.61896  & 0.65672     & 0.67788     & 0.70492     &0.72392      & 0.88052    \\
          &$(0.60366)$  &$(0.64494)$  &$(0.67553)$  &$(0.70044)$  & $(0.72164)$ &$(0.88009)$ \\  \hline
\multirow{2}{*}{0.78}      &0.61876   & 0.65524     & 0.67648     & 0.70312     & 0.72180     & 0.87340  \\
          &$(0.60313)$  &$(0.64422)$  &$(0.67468)$  &$(0.69949)$  &$(0.72061)$  &$(0.87886)$ \\  \hline
\multirow{2}{*}{0.89}      &0.61180  &0.65256  &0.67372  &0.70356   &0.72016   &0.87680 \\
&$(0.60135)$ &$(0.64178)$ &$(0.67179)$ &$(0.69626)$  &$(0.71712)$  &$(0.87464)$ \\  \hline

\multirow{2}{*}{0.90}      & 0.61120  & 0.65236     & 0.67688	  & 0.70064     &0.72020      & 0.87176    \\
          &$(0.60120)$  &$(0.64157)$  &$(0.67154)$  &$(0.69598)$  & $(0.71682)$ &$(0.87428)$ \\  \hline
\multirow{2}{*}{1.15}   & 0.61580    & 0.64408    & 0.67444	   & 0.69276    &0.70944     & 0.86296  \\
          &$(0.59767)$  &$(0.63674)$  &$(0.66581)$  &$(0.68958)$  &$(0.70988)$  &$(0.86565)$ \\  \hline
\multirow{2}{*}{1.16}   &0.61856 & 0.64412 & 0.67528 & 0.69528 & 0.71308 & 0.86432  \\
&$(0.59754)$  &$(0.63656)$  &$(0.66560)$  &$(0.68934)$  &$(0.70962)$  &$(0.86532)$ \\  \hline
\multirow{2}{*}{1.37}  & 0.60564    & 0.63624    & 0.67340    & 0.68640    & 0.70624    & 0.86104\\
&$(0.59489)$ &$(0.63293)$ &$(0.66128)$ &$(0.68449)$ &$(0.70436)$ &$(0.85859)$ \\  \hline
\multirow{2}{*}{1.38} & 0.60676  & 0.64320    & 0.66784    & 0.68536    & 0.70592    & 0.85880 \\
&$(0.59477)$ &$(0.63276)$ &$(0.66108)$ &$(0.68427)$ &$(0.70412)$ &$(0.85827)$\\  \hline
\end{tabular}
}
\end{table}

\begin{table}[h!]
\centering
\caption{\small{The PCS based on Monte-Carlo Simulations and the asymptotic results (in parenthesis) when the null distribution is xgamma, for different values of $\theta$}}%
\label{tab:tab6}
\scalebox{0.7}{
\begin{tabular}{lcccccc}
\hline
$\theta \downarrow n \rightarrow$   & 20        & 40          & 60         & 80    & 100   & 400 \\ \hline
\multirow{2}{*}{0.85}      &0.59160  &0.63696  &0.67272  &0.69856   &0.71836   &0.88056  \\
          &$(0.60410)$ &$(0.64554)$ &$(0.67624)$ &$(0.70122)$ &$(0.72249)$ &$(0.88110)$ \\  \hline
\multirow{2}{*}{0.90}      &0.59040  &0.63612  &0.67144  &0.69668   &0.71684   &0.87856  \\
          &$(0.60319)$ &$(0.64430)$ &$(0.67478)$ &$(0.69960)$ &$(0.72073)$ &$(0.87900)$ \\  \hline
\multirow{2}{*}{1.05}      &0.58832  &0.62940  &0.66692  &0.69312   &0.71184   &0.87252  \\
          &$(0.60088)$ &$(0.64114)$ &$(0.67102)$ &$(0.69541)$ &$(0.71620)$ &$(0.87351)$ \\  \hline
\multirow{2}{*}{1.10}    &0.58948  &0.62980  &0.66380  &0.68408  &0.71192  &0.86996 \\
 &$(0.60020)$  &$(0.64021)$  &$(0.66992)$  &$(0.69417)$   &$(0.71486)$   &$(0.87187)$
 \\  \hline
\multirow{2}{*}{1.25}   &0.58692  &0.62796  &0.66324  &0.68860   &0.70764   &0.86732  \\
          &$(0.59833)$ &$(0.63764)$ &$(0.66688)$ &$(0.69077)$ &$(0.71117)$ &$(0.86728)$ \\  \hline
\multirow{2}{*}{1.26}      &0.58712 & 0.62960 & 0.66472 & 0.68848 & 0.70440 & 0.86760\\
&$(0.59821)$  &$(0.63748)$  &$(0.66669)$  &$(0.69055)$   &$(0.71094)$   &$(0.86699)$ \\  \hline
\multirow{2}{*}{1.40}      &0.58448  &0.62688  &0.65984  &0.68536   &0.70176   &0.86180  \\
          &$(0.59662)$ &$(0.63531)$ &$(0.66410)$ &$(0.68766)$ &$(0.70780)$ &$(0.86301)$  \\  \hline
\multirow{2}{*}{1.50}      &0.58004  &0.62440  &0.65500  &0.68928   &0.70168   &0.86024  \\
          &$(0.59555)$ &$(0.63383)$ &$(0.66235)$ &$(0.68569)$ &$(0.70567)$ &$(0.86027)$  \\  \hline
\multirow{2}{*}{1.65}      &0.5822 & 0.62712 & 0.65240 & 0.67924 & 0.70000 & 0.85336\\
&$(0.59400)$  &$(0.63170)$  &$(0.65982)$  &$(0.68285)$   &$(0.70258)$   &$(0.85626)$  \\  \hline          
\multirow{2}{*}{1.80}      &0.58444  &0.62228  &0.65968  &0.67316   &0.69392   &0.84964  \\
          &$(0.59251)$ &$(0.62965)$ &$(0.65737)$ &$(0.68011)$ &$(0.69960)$ &$(0.85233)$ \\  \hline
\multirow{2}{*}{2.00}      & 0.58360 & 0.62036 & 0.64884 & 0.67656 & 0.69080 & 0.84968 \\
&$(0.59059)$ &$(0.62700)$ &$(0.65422)$ &$(0.67656)$ &$(0.69574)$ &$(0.84717)$ \\  \hline
\multirow{2}{*}{2.05}      & 0.57592 & 0.62340 & 0.64804 & 0.67084  & 0.68944 & 0.84388 \\
&$(0.59012)$ &$(0.62636)$ &$(0.65345)$ &$(0.67569)$ &$(0.69479)$ &$(0.84589)$ \\  \hline
\end{tabular}
}
\end{table}

\section{Real data illustration}
\label{sec:6}
In this section we analyze two data sets and use our method to discriminate between two distribution functions.
\subsection*{Illustration-I}
As a first illustration, data set represents the number of revolution before failure of the 23 ball bearings in the life-test is considered. It was originally reported in Lawless (1982). 
When we use Lindley model, the MLE of the parameter and the corresponding log-likelihood value are obtained as $0.0273$ and $-115.7356$, respectively. The K--S distance between the data and the fitted Lindley distribution function is 0.19283 and the corresponding p-value is 0.31739.\\
On the other hand, when we use xgamma model, the MLE of the parameter and the corresponding log-likelihood value are obtained as $0.04071$ and $-113.9634$, respectively. The K--S distance between the data and the fitted xgamma distribution function is 0.13228 and the corresponding p-value is 0.76796. We also present the observed, expected frequencies for different groups and the corresponding $\chi^2$ statistics for both the distributions to the fitted data. We observe that for this data set, both the distributions provide quite good fit to the data. The results are presented in Table~\ref{tab:chidata1}.\\
The $\chi^2$ values are $3.0419$ and $1.4667$ for Lindley and xgamma distributions with corresponding p-values 0.3852 and 0.689 respectively.\\
The logarithm of RML i.e., $T=-1.7722$ which is less than $0$. Hence, it indicates to choose the xgamma model.

\subsection*{Illustration-II}
Now, we represent another data set for second illustration. The second data set represents the waiting times (in minutes) before service of 100 bank customers (see Ghitany et al.,~2008). 
For the second data set, when we use Lindley model, the MLE of the parameter and the corresponding log-likelihood value are obtained as $0.1866$ and $-319.0374$, respectively. The K--S distance between the data and the fitted Lindley distribution function is 0.06768 and the corresponding p-value is 0.74946.\\
When we use xgamma model, the MLE of the parameter and the corresponding log-likelihood value are obtained as $0.2634$ and $-321.0203$, respectively. The K--S distance between the data and the fitted xgamma distribution function is 0.06249 and the corresponding p-value is 0.82970.\\
In this case also we present the observed, expected frequencies for different groups and the corresponding $\chi^2$ statistics for both the distributions to the fitted data and we observe that, for this data set also, both the distributions provide quite good fit. The results are presented in Table~\ref{tab:chidata2}.
In this case, the $\chi^2$ values are $0.1833$ and $1.3041$ for Lindley and xgamma distributions with corresponding p-values 0.9802 and 0.7281 respectively.\\
The logarithm of RML i.e., $T=1.9829$ which is greater than $0$. Hence, for this case the indication is to choose the Lindley model.

\begin{table}[h!]
    \begin{minipage}{.5\linewidth}
     \centering
\caption{\small{Observed and expected frequencies \\for two distributions of ball bearing data}}
\label{tab:chidata1}
\scalebox{0.8}{
\begin{tabular}{lccc}
\hline
Intervals &Observed  &\multicolumn{2}{c}{Expected frequency}\\
 &frequency &Lindley &xgamma\\
\hline
0--35 		&3		&5.93 		&4.50\\
35--55		&7		&4.46		&4.88\\
55--80		&5		&4.52		&5.45\\
80--100		&3		&2.61		&3.12\\
Above 100	&5		&5.48		&5.05\\
\hline
\end{tabular}
}
    \end{minipage}%
    \begin{minipage}{.5\linewidth}
\centering
\caption{\small{Observed and expected frequencies \\for two distributions of bank customer data}}
\label{tab:chidata2}
\scalebox{0.8}{
\begin{tabular}{lccc}
\hline
Intervals &Observed &\multicolumn{2}{c}{Expected frequency}\\
 &frequency &Lindley &xgamma\\
\hline
0--5			&31		&29.73		&26.88\\
5--10			&31		&30.46		&31.26\\
10--15			&19		&19.36		&22.03\\
15--20			&10		&10.52		&11.51\\
Above 20		&9		&9.93		&8.32\\
\hline
\end{tabular}
}
    \end{minipage} 
\end{table}
\section{Concluding remarks}
\label{sec:7}
In this article, problem of discriminating between the two families of probability distributions, viz., the Lindley and xgamma, is considered. The statistic based on the ratio of the maximized likelihoods is considered and we obtain the asymptotic distributions of the test
statistics under null hypotheses. We compare the probability of correct selection using Monte-Carlo simulations with the asymptotic results and it is observed the asymptotic results work quite well for a wide range of the parameter space, even when the sample size is very small. Therefore, the asymptotic results can be utilized to estimate the probability of correct selection. To calculate the minimum sample size required for a user specified probability of correct selection, asymptotic results are used. The concept of tolerance level is used based on the distance between the two distribution functions. For a particular $D^*$
tolerance level the minimum sample size $n$ is obtained for a given user specified protection level. Tables are provided for the protection level $0.90$, however, for the other protection level the tables can be easily generated, for example, if we need the protection level $p^* = 0.85$, then all the entries corresponding to the row of $n$, will be multiplied by $\frac{z_{0.85}}{z_{0.90}}$. Tables~\ref{tab:tab3} and \ref{tab:tab4}, therefore, can be used for any given protection level.

\end{document}